\documentclass[conference]{IEEEtran}
\ifCLASSINFOpdf
\else
\fi

\usepackage{eulervm}

\usepackage{graphicx}
\usepackage{subfigure}
\usepackage{array}
\usepackage{amsmath}
\usepackage{amssymb}
\usepackage{amsthm}
\usepackage{mathrsfs}
\usepackage{bbm}
\usepackage{bbold}
\usepackage{xfrac}
\usepackage{algorithm}
\usepackage{algorithmic}
\usepackage{upgreek}
\usepackage{clrscode3e}
\usepackage{bm}
\usepackage{cite}

\DeclareFontFamily{OT1}{pzc}{}
\DeclareFontShape{OT1}{pzc}{m}{it}{<-> s * [1.10] pzcmi7t}{}
\DeclareMathAlphabet{\mathpzc}{OT1}{pzc}{m}{it}

\hyphenation{op-tical net-works semi-conduc-tor}

\begin{document}
%
\title{A Performance Study of Energy Minimization for Interleaved and Localized FDMA}


\author{\IEEEauthorblockN{Lei You$^1$, Lei Lei$^2$, and Di Yuan$^{2,3}$}
\IEEEauthorblockA{\small$^1$Information Engineering College,
Qingdao University, China\\}
\IEEEauthorblockA{\small$^2$Department of Science and Technology,
Link\"{o}ping University, Sweden\\}
\IEEEauthorblockA{\small$^3$Institute for Systems Research, University of Maryland, USA\\}
\small\texttt{youleiqdu@gmail.com, \{lei.lei, di.yuan\}@liu.se, diyuan@umd.edu}
}


%


\maketitle

\begin{abstract}
Optimal channel allocation is a key performance engineering aspect in single-carrier frequency-division multiple access (SC-FDMA). It is of significance to consider minimum sum power (Min-Power), subject to meeting specified user's demand, since mobile users typically employ battery-powered handsets. In this paper, we prove that Min-Power is polynomial-time solvable for interleaved SC-FDMA (IFDMA). Then we propose a channel allocation algorithm for IFDMA, which is guaranteed to achieve global optimum in polynomial time. We numerically compare the proposed algorithm with optimal localized SC-FDMA (LFDMA) for Min-Power. The results show that LFDMA outperforms IFDMA in the maximal supported user demand. When the user demand can be satisfied in both LFDMA and IFDMA, LFDMA performs slightly better than IFDMA. However Min-Power is polynomial-time solvable for IFDMA whereas it is not for LFDMA.
\end{abstract}


%
\IEEEpeerreviewmaketitle

\theoremstyle{plain}
\newtheorem{definition}{Definition}
\newtheorem{theorem}{Theorem}
\newtheorem{lemma}{Lemma}
\newtheorem{proposition}{Proposition}
\newtheorem{postulation}{Postulation}

\section{Introduction}
\IEEEPARstart{O}{ver} the past years orthogonal frequency division multiplexing access (OFDMA) has been an important technique for broadband wireless communications. A major advantage of OFDMA is its robustness in the presence of multi-path fading in cellular applications \cite{Nee:2000vr}. In third generation partnership project long term evolution (3GPP-LTE) standard, the uplink access scheme is single-carrier frequency-division multiple access (SC-FDMA) \cite{Myung:2006fe}, a modified version of OFDMA but having similar throughput performance and essentially the same overall complexity as OFDMA. 

There are two approaches to assign users among channels. In localized SC-FDMA (LFDMA), each user uses a set of adjacent channels to transmit data \cite{Ahmen:2011globecom}. 
The other is distributed SC-FDMA in which the channels used by a user are spread over the entire channel spectrum. 
One realization of distributed SC-FDMA is interleaved SC-FDMA (IFDMA) \cite{Sorger:1998ip} where the occupied channels for each user are equidistant from each other. Currently, IFDMA as well as LFDMA has been investigated in 3GPP-LTE for the uplink transmission \cite{Frank:2007cz}. The trade-off on channel allocation between LFDMA and IFDMA is investigated in many literatures. In \cite{Song:ga}, Song et.al. state that IFDMA has less carrier frequency offset (CFO) interference but LFDMA achieves more diversity gain. In \cite{Myung:2006fe} Myung et.al. find that LFDMA with channel-dependent scheduling (CDS) results in higher throughput than IFDMA, whereas the peak to average power ratio (PAPR) performance of IFDMA is better than that of LFDMA. 

Battery-powered equipments are increasingly employed by mobile users. It is of significance to consider minimum sum power (Min-Power), subject to meeting demand target \cite{Kivanc:2003twc,Feiten:2003twc,Joung:2012iet}. The Min-Power problem for LFDMA is proved to be $\mathcal{NP}$-hard in \cite{Lei:2013en}. As far as our information goes, few literatures investigate the Min-Power problem in IFDMA, while some heuristic algorithms for consecutive channel allocation are presented in \cite{Lei:2013en, Nam:2010ky, Kim:2010fq, Sokmen:2010gv, Wong:2009gra}. In this paper, we present a minimal power channel allocation ($\proc{MPCA}$) algorithm  for IFDMA, which achieves global optimality in polynomial time. The Min-Power in IFDMA is modeled as a combinatorial optimization problem. The rate function is not restricted to any particular one in order to stress the generality of the proposed approach. 
We compare MPCA with the global optimal solution for LFDMA, as same as in \cite{Lei:2013en}.
Our key contributions are as follows. 

\begin{itemize}
\item We show that the Min-Power for IFDMA is polynomial-time solvable.
\item A polynomial-time algorithm MPCA is developed to solve the Min-Power problem in IFDMA.  
\item Numerically, we find that on Min-Power, LFDMA outperforms IFDMA in maximal supported user demand. When the user demand can be satisfied, LFDMA performs slightly better than IFDMA.
\end{itemize}

This paper is organized as follows. In Section \ref{sec:details}, we introduce system model and Min-Power problem for IFDMA. We further prove that Min-Power is polynomial solvable in IFDMA. The algorithm's description and its pseudo-code are proposed in Section \ref{sec:pseudo-code}. Numerical results are given in Section \ref{sec:comparison}. Section \ref{sec:conclusion} concludes this paper.

\section{Interleaved Min-Power Problem}
\label{sec:details}

\begin{table}[tbp]
  \caption{MATHEMATICAL NOTATIONS}
  \label{tab:math_table}
\centering
  \begin{tabular}{|p{0.1\linewidth}<{\centering}|p{0.8\linewidth}<{\centering}|}
  \hline
   \textbf{notation} & \textbf{description}  \\
    \hline
    $\mathcal{M}$ & the users set \\
    \hline
    $\mathcal{N}$ & the channels set \\
    \hline 
    $M$ & the number of users\\
        \hline
    $N$ & the number of channels\\
        \hline
    $\mathcal{L}_i$ & the $i^{th}$ sub block\\
    \hline
    $\mathcal{J}_i$ & the set of allocated channels to user $i$\\
    \hline
    $\mathcal{B}$ & the set of all channel blocks \\
    \hline
    $\mathpzc{b}_k$ & the channel block identified by $k$ \\
        \hline        
    $c$ & the number of allocated channels for each user\\
        \hline
    $s$ & the interspace size between neighbored sub blocks\\
        \hline
    $q$ & the shift distance of the first allocated channel \\
    \hline
    $L$ & the total length of the channel block \\
    \hline
  \end{tabular}
\end{table}

\subsection{System Model}
\label{subsec:system_model}

Let $\mathcal{M}\triangleq\{1,\ldots,M\}$ and $\mathcal{N}\triangleq\{1,\ldots,N\}$ denote the sets of users and channels, respectively. For uplink, the users in $\mathcal{M}$ send data concurrently to a base station. Each user has a total
power limit, denoted by $P^u$. Moreover, for a user, the power
has to be equal on all allocated channels, subject to a given
channel peak power limit $P^s$. Therefore, a user being allocated $n$ channels will use power at most $\min\left\{\frac{P^u}{n},P^s\right\}$ on each channel. We assume that all users are allocated with the same number of channels. The number of allocated channels to each user is denoted by $c$. The total number of allocated channels is $cM$.
For IFDMA, the channels allocated to users are distributed equidistantly. An example for IFDMA is illustrated in \figurename~\ref{fig:example}. 

The occupied segment of channels is up to three parameters, $c$, $s$ and $q$, shown in \tablename~\ref{tab:math_table}. The parameter $c$ ranges from $1$ to $\lfloor\frac{N}{M}\rfloor$. We let $s$ be the interspace ranging from $0$ to $\lfloor\frac{N-cM}{c-1}\rfloor$. We represent the shift distance from the left end of the channels spectrum as $q$ that ranges from $0$ to $N-L$, where $L=(c-1)\times (M+s)+M$ is the length of the segment (composed of channels and interspaces). We use the term `channel block' to denote the occupied segment of the spectrum, as illustrated in \figurename~\ref{fig:example}. If the parameters $c$, $s$ and $q$ are fixed, then the corresponding channel block is determined. We divide each channel block into $c$ `sub blocks', as $\mathcal{L}_1,\mathcal{L}_2,\ldots,\mathcal{L}_c$ respectively, with $|\mathcal{L}_1|=|\mathcal{L}_2|=\cdots=|\mathcal{L}_c|=M$. The set of all allocated channels for any user $j$ in each $\mathcal{L}$, is denoted by $\mathcal{J}_j$, with $|\mathcal{J}_1|=|\mathcal{J}_2|=\cdots=|\mathcal{J}_M|=c$.   
The total number of different channel blocks is denoted by $K$. We use $\mathcal{B}~(|\mathcal{B}|=K)$ to represent the set of all the $K$ channel blocks, where each element is denoted by $\mathpzc{b}_1, \mathpzc{b}_2, \ldots, \mathpzc{b}_K\in\mathcal{B}$, respectively. 



\subsection{Problem Formalization}

All the $K$ possible channel blocks are obtained by the $\proc{Get-Channels-Sets}$ procedure. The two cases, $c=1$ and $c>1$, should be treated differently. This is because we have only one sub-block when $c=1$. In this case, $s$ is meaningless. In line 2--4, we get the channel blocks with $c=1$. In line 5--9, we get the channel blocks with $c>1$. All the possible channel blocks are saved as $\mathpzc{b}_k$, as shown in line 3 and 8. Then there is one increase on the index variable $k$ for the next iteration, as shown in line 4 and 9. In line 5, the total number of possible channel blocks, $K$, is thus obtained.

\begin{figure}[t]
  \centering
  \includegraphics[width=\linewidth]{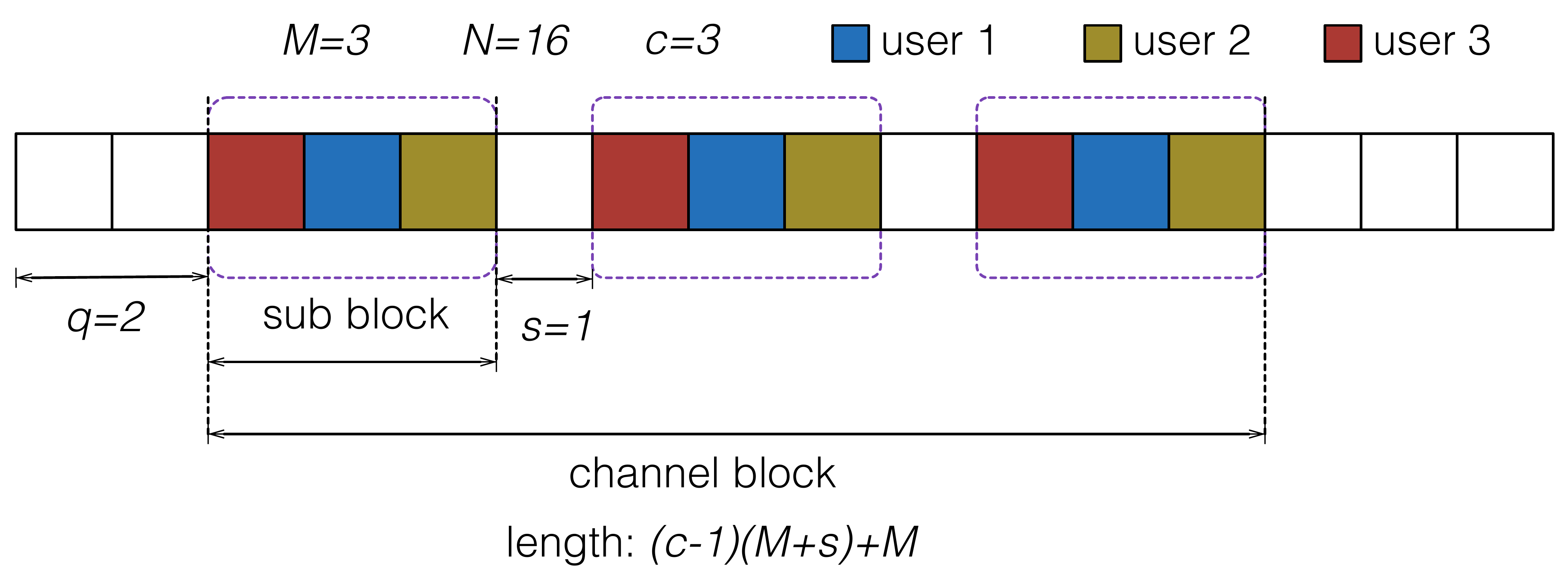}
    \caption{An example for interleaved channel allocation}
  \label{fig:example}
\end{figure}

 We remark that for any channel block $\mathpzc{b}\in\mathcal{B}$, each user is assigned a unique number $j$, representing the $j$th order in each $\mathcal{L}$. The interleaved channel allocation for all the $M$ users is corresponding to a permutation $\bm{\nu}$ of integers $1$ to $M$, i.e., $\bm{\nu}=[\nu_1,\nu_2,\ldots,\nu_M]$, where $\nu_i~(1\leq \nu_i\leq M)$ is the user $i$'s order in each $\mathcal{L}$. In the example shown in \figurename~\ref{fig:example}, the permutation is $\bm{\nu}=[2,3,1]$, which indicates that user $1$ appears in $2$nd order, user $2$ in $3$rd and user $3$ in $1$st order, in all the sub blocks $\mathcal{L}_1,\mathcal{L}_2$ and $\mathcal{L}_3$. 
It can be observed that the allocation for all users is a function of the permutation $\bm{\nu}$, denoted by $\mathpzc{b}(\bm{\nu})$. Together, for each user $i$, $\mathcal{J}_i$ is a function of $\nu_i$, denoted by $\mathcal{J}_i(\bm{\nu})$.

\begin{codebox}
\Procname{$\proc{Get-Channels-Sets}(M,N)$}
\li $c=1,s=k=0$
\li \For $q \gets 0 \To N-cM$
\li \> $\mathpzc{b}_k\leftarrow\textnormal{ascertained by }c,q$
\li \> $k \gets k+1$
\li \For $c \gets 2 \To \lfloor\frac{N}{M}\rfloor$
\li \> \For $s \gets 0 \To \lfloor\frac{N-cM}{c-1}\rfloor$
\li \> \> \For $q \gets 0 \To N-L$
\li \> \> \> $\mathpzc{b}_k\leftarrow\textnormal{ascertained by }c,s,q$
\li \> \> \> $k \gets k+1$
\li $K=k-1$
\li \Return $\mathpzc{b}_1,\mathpzc{b}_2,\ldots,\mathpzc{b}_K$
\end{codebox}

We give the definition of Min-Power problem in IFDMA, where we consider to minimize the total uplink power required to support users' target demand, denoted by $d_i$ for user $i$. We use $f(i,j,p)$ to denote the rate of user $i$ on channel $j$ with power $p$. For the sake of not losing generality, we do not assume any specific power function. Instead, we use $p_{\mathpzc{b}}$ to denote the minimum total power required to satisfy all users' demand on channel block $\mathpzc{b}\in\mathcal{B}$. For each user $i$, the power required to satisfy its demand $d_i$ is represented as $p_i$. Specifically, the power of user $i$ on channel $j$ is denoted by $p_{i,j}$. For that power has to be equal on all channels of user $i$, $p_{i}=\{p: \min \sum_{j\in\mathcal{J}_i}f(i,j,p)\geq d_i\}$, subject to $cp\leq P^u$ and $p\leq P^s$. Thus $p_{\mathpzc{b}}=\min\limits_{\bm{\nu}}\sum_{i\in \mathcal{M}}p_{\mathpzc{b}(\bm{\nu})}$, is the minimal power for the channel block $\mathpzc{b}$. Given $f$, this minimization is straightforward (e.g., bi-section search assuming $f(i,j,p)$ is monotonic in $p$). If the power limits are not exceeded for all users, the allocation is feasible, otherwise the allocation is infeasible. Then the power minimization problem is given, as follows. 


[\textbf{Min-Power}] For each feasible channel-user block $\mathpzc{b}\in \mathcal{B}$ minimizing $p_{\mathpzc{b}}$ by exploring all permutation $\bm{v}=[v_1,v_2,\ldots,v_M]~~(\forall i~,1\leq v_i\leq M)$, where $v_i$ indicates the order for user $i$ in each $\mathcal{L}$ in $\mathpzc{b}$. The optimal power cost is $p^{*}=\min\limits_{\mathpzc{b}_k\in \mathcal{B}}{p_{\mathpzc{b}}}$.


\begin{figure}[t]
  \centering
  \includegraphics[width=\linewidth]{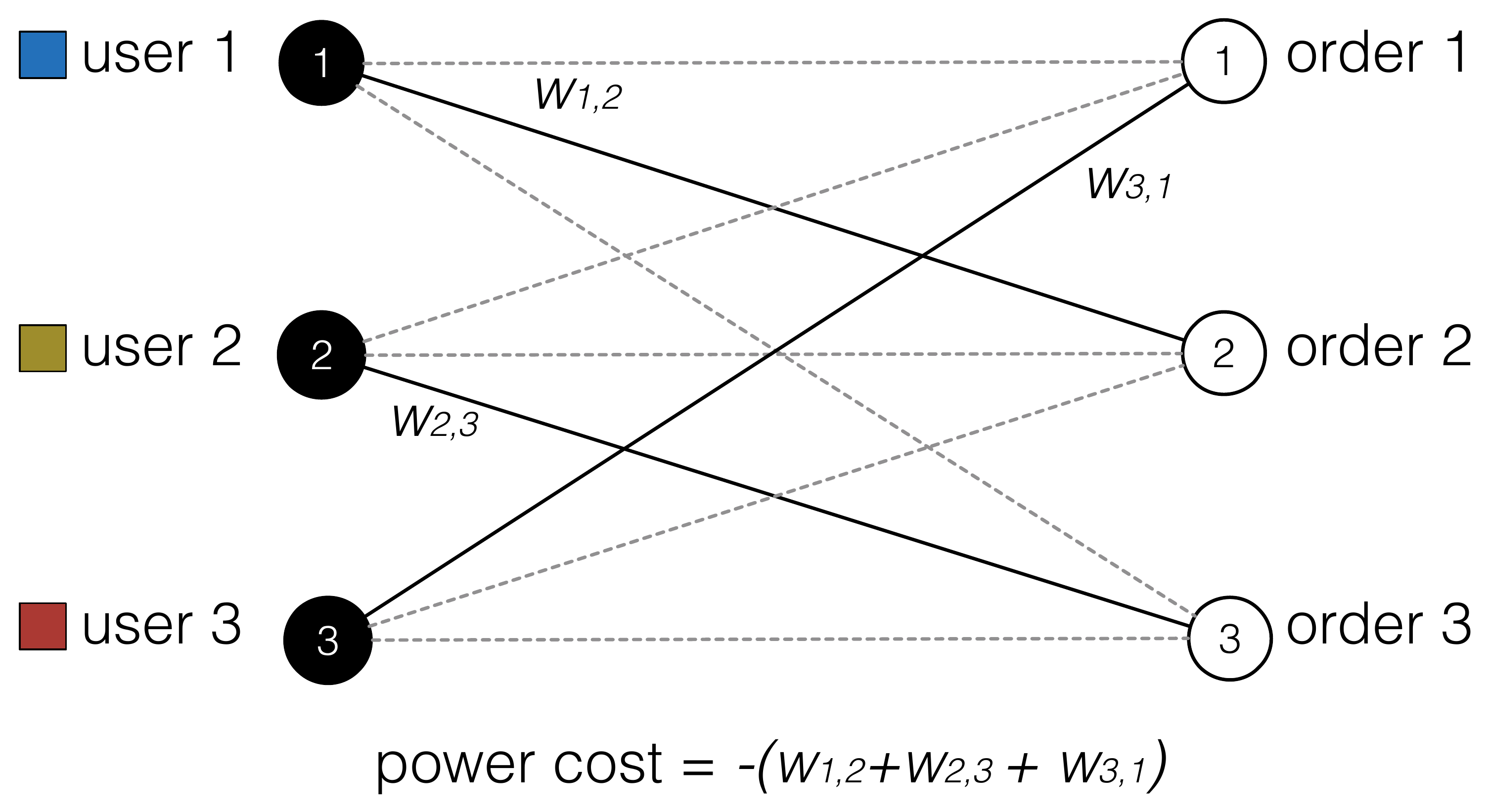}
    \caption{Interleaved Min-Power as a matching problem}
  \label{fig:bipartite}
\end{figure}

Next we show that for each feasible $\mathpzc{b}$, the interleaved Min-Power can be reduced to a maximum weight perfect matching problem.
As illustrated in \figurename~\ref{fig:bipartite}, the nodes in the left part stand for the users, and the nodes in the right part stand for the order in each sub block $\mathcal{L}$. Then each perfect matching in this bipartite graph is corresponding to a permutation $\bm{\nu}=[\nu_1,\nu_2,\ldots,\nu_M]$. More specifically, any left node numbered $i$ is linked with the right node numbered $v_i$ in each matching. The weight for each link between $i$ and $v_i$ is set to be the negative value of the corresponding power cost, respectively. The minimal power cost is equal to the maximal sum of weights among all perfect matchings. The channel allocation problem is thereby identical to the maximum weight perfect matching in a bipartite graph, which can be solved by $\proc{Kuhn-Munkres}$ (KM) algorithm \cite{Lundgren:2010} in $O(M^3)$. The details of KM are given in the appendix.

\subsection{Polynomial-time Proof}
In this section, we prove that for interleaved channel allocation, Min-Power is polynomial-time solvable for global optimality.
\begin{theorem}
For interleaved channel allocation, Min-Power admits
polynomial-time algorithm for global optimality.
\end{theorem}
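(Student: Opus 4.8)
The plan is to turn the two reductions already set up in the excerpt into a single enumerate‑and‑match algorithm and then bound its running time by a polynomial in $M$ and $N$. Concretely: (i) show that the number $K$ of candidate channel blocks is polynomial, so that we may afford to solve Min‑Power to optimality \emph{on each block} and take the best; (ii) show that, for a fixed block, computing the optimal user‑to‑order permutation is a weighted bipartite matching instance of polynomial size; (iii) assemble the pieces and read off the complexity. Since the excerpt already establishes $p^{\ast}=\min_{\mathpzc{b}\in\mathcal{B}}p_{\mathpzc{b}}$ and the matching reduction of \figurename~\ref{fig:bipartite}, the work is mostly bookkeeping on sizes.

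First I would bound $|\mathcal{B}|$. A channel block is completely specified by the triple $(c,s,q)$ produced by $\proc{Get-Channels-Sets}$, where $1\le c\le\lfloor N/M\rfloor$, $0\le s\le\lfloor (N-cM)/(c-1)\rfloor$ for $c>1$, and $0\le q\le N-L$ with $L=(c-1)(M+s)+M$. Each of the three parameters is a nonnegative integer at most $N$, so $K=|\mathcal{B}|\le N^{3}$, i.e.\ polynomial; hence running $\proc{Get-Channels-Sets}$ and iterating over all blocks costs only polynomial time.

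Next, fix one block $\mathpzc{b}$. Its $c$ sub‑blocks and the internal ordering are determined, so the channel set $\mathcal{J}_t$ received by whichever user is placed in order $t$ is fixed; the only degree of freedom is the permutation $\bm{\nu}$ assigning users to orders. For user $i$ placed in order $t$, the needed power is the least $p$ with $\sum_{j\in\mathcal{J}_t}f(i,j,p)\ge d_i$ subject to $cp\le P^u$ and $p\le P^s$; assuming $f(i,j,\cdot)$ is monotone in $p$, this single‑variable problem is solved by bisection to the desired accuracy in time polynomial in the input size, and the block is deemed feasible iff this $p$ respects the limits for every $(i,t)$. Doing this for all $M^{2}$ pairs yields a cost matrix $w_{i,t}$. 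By the reduction already described, a permutation $\bm{\nu}$ is exactly a perfect matching in the complete bipartite graph $K_{M,M}$ with edge weights $-w_{i,t}$, and $\min_{\bm{\nu}}\sum_i w_{i,\nu_i}$ equals the negative of the maximum‑weight perfect matching, computable by $\proc{Kuhn-Munkres}$ in $O(M^{3})$. Combining, the whole procedure (this is $\proc{MPCA}$) runs in $O\!\left(N^{3}(M^{2}B+M^{3})\right)$ time, where $B$ is the polynomial cost of one bisection, and by construction it returns $p^{\ast}$; this proves the theorem.

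The only real obstacle is step (i): one must argue that the interleaved structure — equal number $c$ of channels per user, equidistant spacing $s$, common shift $q$ — forces $\mathcal{B}$ to have polynomially many elements, because it is precisely this rigidity that separates IFDMA from LFDMA, whose consecutive‑block allocations are far more numerous and for which Min‑Power is $\mathcal{NP}$‑hard. Once that counting is in hand, steps (ii) and (iii) are a direct invocation of the matching reduction and the standard $O(M^{3})$ bound for Kuhn–Munkres, with only the routine observation that the per‑user power lookup is a polynomial‑time one‑dimensional search.
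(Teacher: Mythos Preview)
Your proposal is correct and follows essentially the same route as the paper: bound $K=|\mathcal{B}|=O(N^{3})$ by the ranges of $(c,s,q)$, then solve each block by the Kuhn--Munkres matching in $O(M^{3})$ and take the minimum. The only cosmetic difference is that you explicitly account for the per-edge bisection cost, whereas the paper's theorem proof omits it (it is folded into the later $O(M^{3}N^{4})$ analysis of $\proc{MPCA}$).
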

\begin{proof}
\begin{multline*}
K=(N-cM+1)+(\lfloor\frac{N}{M}\rfloor-1)(\lfloor\frac{N-cM}{c-1}\rfloor+1)(N-L+1) \\
\leq (N+1)+(N+1)^3=O(N^3) ~~~~~~~~(1\leq M \leq N)
\end{multline*}
\vskip 7pt
Then $\proc{Get-Channels-Sets}$ is $O(N^3)$. We can achieve the global optimality by resorting to the enumeration method, i.e., to check every possible channel block $\mathpzc{b}_k\in B$ by running KM once. The total cost for the whole process should be $K\times O(M^3)=O(M^3N^3)$.
\end{proof}

\section{Algorithm description and its pseudo-code}
\label{sec:pseudo-code}

In this section, we give the description and pseudo-code of optimal Min-Power algorithm for the interleaved case, $\proc{MPCA}$.

In $\proc{MPCA}$, firstly the procedure $\proc{Get-Channels-Sets}$ is called to obtain all the $K$ channel blocks $\mathpzc{b}_1, \mathpzc{b}_2, \ldots, \mathpzc{b}_K$. The cost for this part is $O(K)=O(N^3)$. Then the weight for each corresponding matching between any user-channel pair $(i,\nu_i)$ is calculated for every channel block, as shown in line 2---5, of which the cost is $O(KM^2c)$. Since $c\leq \frac{N}{M}$, we have $O(KM^2c)=O(KMN)=O(MN^4)$. Note that we flip the sign for the power cost so as to make sure the Min-Power to coincide with the maximum matching problem in the bipartite graph. The set $Q$ is used to record the past solutions obtained from KM algorithm, and is initialized to be empty at the beginning, in line 6. In line 7---9, the KM procedure is called for all the $K$ possible channel blocks and the total cost is $O(M^3N^3)$. In line 8, the KM returns a two tuple $(p,\bm{\nu})$, where $p$ is the power value and $\bm{\nu}$ is the corresponding permutation. Finally the two-tuple $(p,\bm{\nu})$ that minimizes the value $p$, is returned as the optimal solution. Then the total computation cost of $\proc{MPCA}$ is $O(N^3)+O(MN^4)+O(M^3N^3)=O(M^3N^4)$.

\begin{codebox}
\Procname{$\proc{MPCA}(\mathcal{M},\mathcal{N})$}
\li $\proc{Get-Channels-Sets}(M,N)$
\li \For $k \gets 1 \To K$
\li \> \For $i \gets 1 \To M$
\li \> \> \For $\nu_i \gets 1 \To M$
\li \> \> \> $w_k[i,v_i] \gets -\sum_{j\in\mathcal{J}_i(\bm{\nu})}p_{i,j}$
\li $Q \gets \phi$
\li \For $k \gets 1 \To K$
\li \> $(p,\bm{\nu})=\proc{Kuhn-Munkres}(M,w_k)$
\li \> $Q=Q\bigcup\{(p,\bm{\nu})\}$
\li \Return $\mathop{\arg\min}\limits_{(p,\bm{\nu})\in Q}p$
\end{codebox}
\vskip 10pt

\section{Performance comparison}
\label{sec:comparison}
For performance evaluation, we consider SC-FDMA uplink of a cell with random and uniform user distribution. \tablename~\ref{tab:simulation} summarizes the key parameters. The channel gain consists of path loss, shadowing, as well as Rayleigh fading. The path loss follows the widely used COST 231 model that extends the Okumura-Hata model for urban scenarios. By the COST 231 model, path loss is frequency dependent. Log-normal shadowing model with 8 dB standard deviation is used \cite{Sokmen:2010gv}. A channel corresponds to a resource block in LTE with twelve subcarriers.

\begin{table}
\centering
\caption{SIMULATION PARAMETERS}
\label{tab:simulation}
\begin{tabular}{|l|l|}
\hline
\textbf{Parameter} & \textbf{Value}  \\
\hline
Cell radius & 1000 m \\
\hline
Carrier frequency & 2 GHz \\
\hline
Number of users $M$ & 10 \\
\hline
Number of channels $N$ & 64 \\
\hline
Channel bandwidth $B$ & 180 KHz \\
\hline
Path loss & COST-231-HATA \\
\hline
Shadowing & Log-normal, 8 dB standard \\
          & deviation \\
\hline
Multipath fading & Rayleigh fading \\
\hline
Noise power spectral density & -174 dBm/Hz \\
\hline
User power limit $P^u$ & 200 mW \\
\hline
Channel peak power limit $P^s$ & 10 mW \\
\hline
User demand value $d_i$ & 400--3000 Kbps \\
\hline
\end{tabular}
\vskip 15pt
\end{table} 

We examine two performance aspects. First, performance evaluation of optimal LFDMA \cite{Lei:2013en} and MPCA  (corresponding to IFDMA) for Min-Power has been carried out. Next, we examine the maximal supported user demand of the two allocation schemes. 
We remark that the system model (Section \ref{sec:details}) and the MPCA algorithm are not restricted to any particular definition
of the utility function or power function. For performance
comparison, the power cost value in Min-Power are derived from the logarithmic function. The setting is coherent with the literature \cite{Lei:2013en}, \cite{Sokmen:2010gv} and \cite{Wong:2009gra}. For user $i$, the achieved rate on the whole channels set for the user is $\sum_{j\in\mathcal{J}_i}B\log_2\left(1+\frac{p_ig_{i,j}}{\sigma^2}\right)$, where $B$ is the channel bandwidth, $g_{i,j}$ is the channel gain for user $i$ on channel $j$, and $\sigma^2$ is the noise power spectral density times the channel bandwidth. 
If assigning channels to user $i$ is feasible, the maximal cost of the assignment is clearly $c\cdot\min\left\{\frac{P^u}{c},P^s\right\}$.
So channel assignment is feasible only if the achieved rate over the channels set meets the demand, i.e., if $\sum_{j\in\mathcal{J}_i}B\log_2\left(1+\frac{\min\{P^u/c,P^s\}g_{i,j}}{\sigma^2}\right)\geq d_i$, otherwise the assignment is infeasible and cannot be performed. Uniform demand is used in the simulation, with $d_i=400$ Kbps, $\forall i\in \mathcal{M}$.

\begin{figure}[!tbp]
  \centering
  \includegraphics[width=\linewidth]{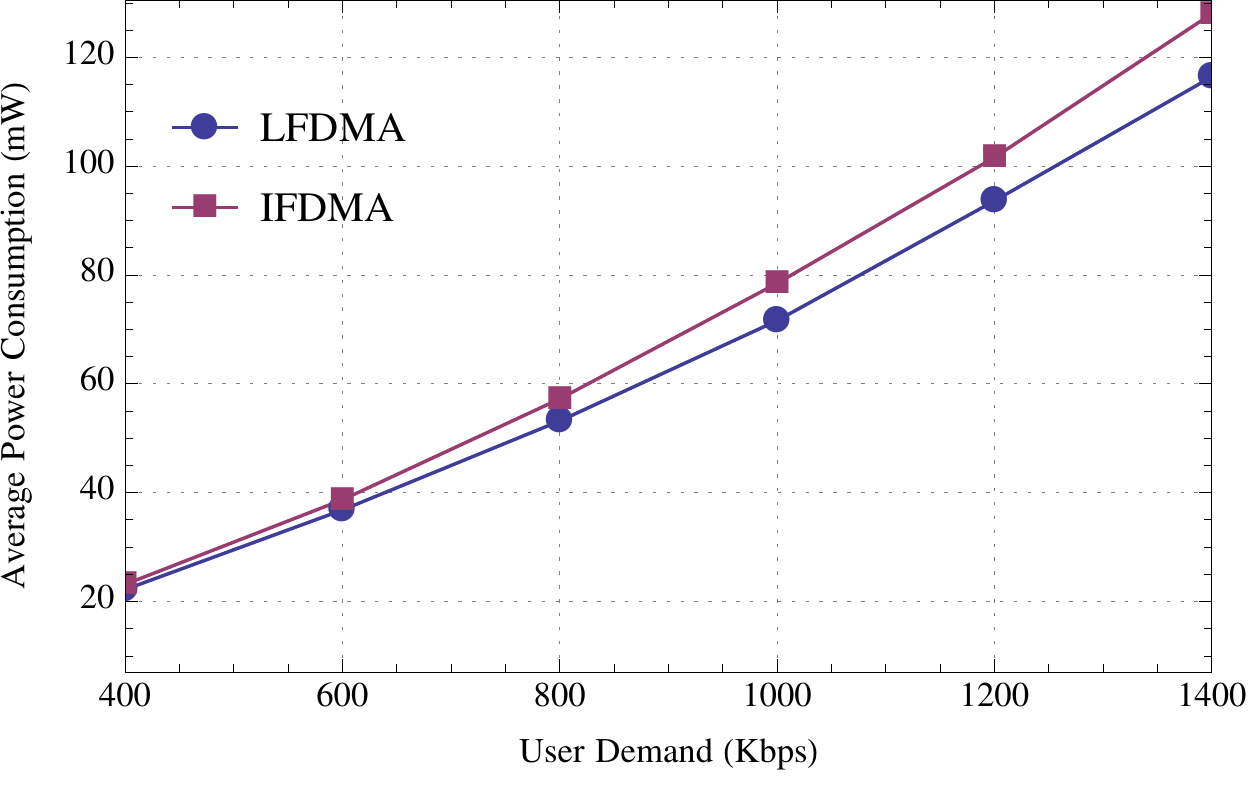}
    \caption{Power consumption with respect to user demand}
  \label{fig:plot1}
\end{figure}

\figurename~\ref{fig:plot1} shows the evaluation results of both LFDMA and IFDMA on varying the user demand. The user power limit $P^u$ is set to be 200 mW and the channel peak power limit $P^s$ is 10 mW. We can see from the results that LFDMA performs slightly better than IFDMA on minimal power consumption. Besides, when the user demand is set to be more than 1.4 Mbps, there is no feasible solution by using IFDMA. However, LFDMA can still support user demand higher than 1.4Mbps. We give the maximal supported user demands results in \figurename~\ref{fig:plot2}.

\begin{figure}[!tbp]
  \centering
  \includegraphics[width=\linewidth]{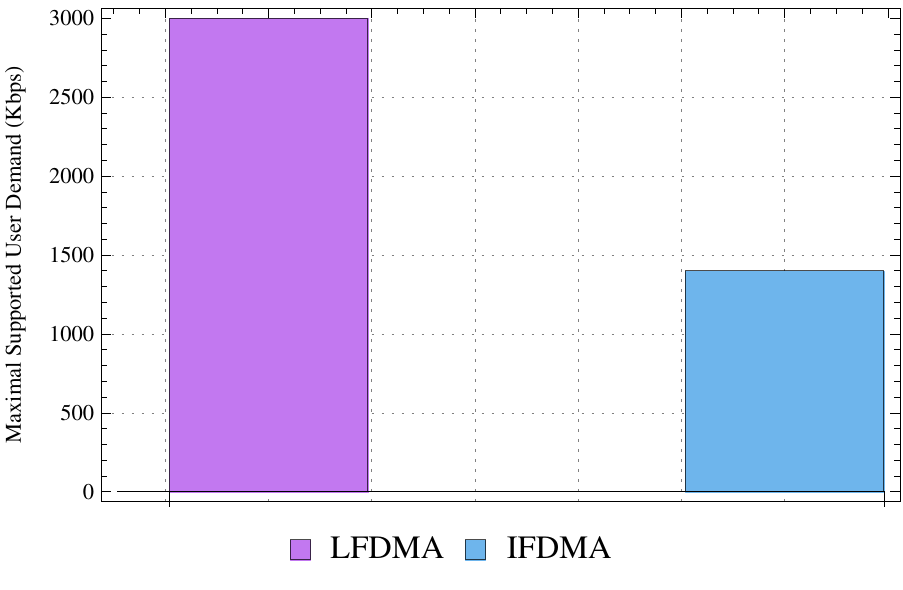}
    \caption{Maximal supported user demand}
  \label{fig:plot2}
\end{figure}

It can be seen from \figurename~\ref{fig:plot2} that the maximal supported user demands are distinctive between LFDMA and IFDMA. The user power limit $P^u$ is constant at 200 mW and the channel peak power limit $P^s$ is set to be 10 mW, as same as in \figurename~\ref{fig:plot1}. Regarding for \figurename~\ref{fig:plot2}, the consecutive channel allocation scheme shows an outstanding performance compared with the interleaved scheme, since the maximal supported power cost for LFDMA is about twice of that for IFDMA.

For Min-Power problem, though the LFDMA outperforms IFDMA in numerical experiments in both \figurename~\ref{fig:plot1} and \figurename~\ref{fig:plot2}, IFDMA is polynomial-time solvable for global optimality. 



\section{Conclusion}
\label{sec:conclusion}

In this paper, we investigated the Min-Power problem on both IFDMA and LFDMA. We proved that Min-Power in IFDMA is polynomial-time solvable. Then the interleaved Min-Power problem was mapped to a maximum weight perfect matching problem in a bipartite graph, which can be solved by resorting to the classic KM approach. The cost of the proposed algorithm MPCA is $O(M^3N^4)$. We numerically compared MPCA with optimal solution in LFDMA. The results showed that for Min-Power, LFDMA outperforms IFDMA in the maximum supported user demand. When the user demand can be satisfied by both LFDMA and IFDMA, LFDMA has slightly better performance than IFDMA. However, Min-Power is polynomial-time solvable for IFDMA whereas it is not for LFDMA.

\section{Acknowledgements}

This work has been supported by the EC Marie Curie
project MESH-WISE (FP7-PEOPLE-2012-IAPP: 324515) and the Link{\"o}ping-Lund Excellence Center in Information Technology (ELLIIT), Sweden.
The work of the second author has been supported by the China Scholarship Council (CSC).

\appendix[Kuhn-Munkres Algorithm for maximum weight perfect matching problem]

\begin{codebox}
\Procname{$\proc{Kuhn-Munkres}(n,w)$}
\li Let $L_x,L_y,V_x,V_y,I,S$ be arrays with length $n$
\li \For $i \gets 1 \To n$
\li \Do $L_x[i] \gets -\infty$
\li     \For $j \gets 1 \To n$
\li     \Do \If $w[i,j]>L_x[i]$
\li         \Then $L_x[i]=w[i,j]$ 
            \End 
        \End
    \End
\li \For $x \gets 1 \To n$
\li \Do \For $i \gets 1 \To n$
\li     \Do $S[i]=\infty$ 
        \End
\li     $found \gets $\textbf{false}
\li     \Repeat
\li         $V_x[0,\ldots,n]=[false,\ldots,false]$
\li         $V_y[0,\ldots,n]=[false,\ldots,false]$
\li         $found \gets \proc{Find-Augment-Route}(x)$
\li         $d \gets \infty$
\li         \For $i \gets 1 \To n$
\li         \Do \If $V_y[i] \isequal $\textbf{false} \textbf{and} $d>S[i]$
\li             \Then $d=S[i]$ 
                \End 
            \End
\li         \For $i \gets 1 \To n$
\li         \Do \If $V_x[i] \isequal $\textbf{true}
\li             \Then $L_x[i]=L_x[i]-d$ \End \End
\li         \For $i \gets 1 \To n$
\li         \Do \If $V_y[i] \isequal $\textbf{true}
\li             \Then $L_x[i]=L_x[i]+d$ 
\li             \Else 
\li                $S[i]=S[i]-d$ 
                \End 
            \End
        \Until $found \isequal $ \textbf{true}
        \End
    \End
\li \Return $\left(-\sum_{i=1}^{i=n}w[I[i],i],I\right)$
\end{codebox}

\begin{codebox}
\Procname{$\proc{Find-Augment-Route}(n,x)$}
\li $V_x[x] \gets $ \textbf{true}
\li \For $y \gets 1 \To n$
\li \Do \If $V_y[y] \isequal $ \textbf{false}
\li     \Then $t \gets L_x[x]+L_y[y]-w[x,y]$
        \End
\li     \If $|t| < \epsilon$
\li     \Then $V_y[y] \gets $\textbf{true}
\li           $found \gets \proc{Find-Augment-Route}(x)$
\li           \If $I[y] \isequal -1$ \textbf{or} $found$
\li           \Then $I[y] \gets x$
\li                 \Return \textbf{true}
              \End
\li      \ElseIf $S[y]>t$
\li      \Then $S[y] \gets t$
         \End
    \End
\li \Return \textbf{false}            
\end{codebox}

%

\end{document}